\theoremstyle{definition}
\newtheorem{lemma}{Lemma}
\newtheorem{proposition}{Proposition}
\begin{document}
	
	\title{Active and Passive IRS Jointly Aided Communication: Deployment Design and Achievable Rate}
	\author{ Min~Fu,~\IEEEmembership{Member,~IEEE}~and~Rui~Zhang,~\IEEEmembership{Fellow,~IEEE}%
		\thanks{ 
			This work is supported in part by MOE Singapore under Award T2EP50120-0024, National University of Singapore under Research Grant R-261-518-005-720, and The Guangdong Provincial Key Laboratory of Big Data Computing.
			(\it{Corresponding author: Rui Zhang.})}
	\thanks{M. Fu is  with the Department of Electrical and Computer Engineering, National University of Singapore, Singapore 117583 (e-mail:  fumin@nus.edu.sg).
    R. Zhang is with the Chinese University of Hong Kong, Shenzhen, and Shenzhen Research Institute of Big Data, Shenzhen, China 518172 (e-mail: rzhang@cuhk.edu.cn). He is also with the Department of Electrical and Computer Engineering, National University of Singapore, Singapore 117583 (e-mail: elezhang@nus.edu.sg).}
}
	
\maketitle
	
\setlength\abovedisplayskip{2pt}
\setlength\belowdisplayskip{2pt}
\setlength\abovedisplayshortskip{2pt}
\setlength\belowdisplayshortskip{2pt}

\vspace{-6mm}
\begin{abstract}
In this letter, we study the wireless point-to-point communication from a transmitter (Tx) to a receiver (Rx), which is jointly aided by an active intelligent reflecting surface (AIRS) and a passive IRS (PIRS).
We consider two practical transmission schemes by deploying the two IRSs in different orders, namely, Tx$\rightarrow$PIRS$\rightarrow$AIRS$\rightarrow$Rx (TPAR) and Tx$\rightarrow$AIRS$\rightarrow$PIRS$\rightarrow$Rx (TAPR).
Assuming line-of-sight channels, we derive the achievable rates for the two schemes by optimizing the placement of the AIRS with the location of the PIRS fixed.
Our analysis shows that when the number of PIRS elements and/or the AIRS amplification power is small, the AIRS should be deployed closer to the Rx in both schemes, and TAPR outperforms TPAR with their respective optimized AIRS/PIRS placement.
Simulation results validate our analysis and show the considerable performance gain achieved by the jointly optimized 
AIRS/PIRS deployment over the existing benchmarks under the same power and IRS element budgets.
\end{abstract}

\begin{IEEEkeywords}
Intelligent reflecting surfaces (IRS), active IRS, double IRSs, IRS deployment, rate maximization. 
\end{IEEEkeywords}

\vspace{-3mm}
\section{Introduction}
 Intelligent reflecting surface (IRS) \cite{Wu2021Tutorial} has recently received significant attention from both academia and industry to improve the spectral and energy efficiency of future wireless networks cost-effectively.
 An IRS typically consists of an array of passive reflecting elements, each of which reflects the incident signal with a desired phase shift and/or amplitude.
 Unlike conventional active relays, passive IRSs (PIRSs) do not require costly transmit/receive radio-frequency (RF) chains and thus incur significantly lower power consumption \cite{di2020reconfigurable}. 
 As IRSs can be flexibly deployed in wireless networks to reconfigure wireless channels dynamically, they have been studied for achieving various functions such as coverage extension, rate enhancement,  interference mitigation, etc (see, e.g., \cite{Wu2021Tutorial} and the references therein).

 Most of the existing works (e.g., \cite{Wu2019intelligentJ, Ma2022Coverage, Han2022DoubleIRS,Zheng2021DoubleIRS,Fu2021RIS}) on IRS have considered the PIRS. 
 Equipped with passive loads (positive resistance), PIRS reflects the incident signal with desired phase shift and reflection gain no larger than one.
 In addition, PIRSs operate in the full-duplex mode without amplification/processing noise or self-interference \cite{Wu2021Tutorial}. 
 In particular, \cite{Wu2019intelligentJ} showed that deploying a single PIRS with $N$ reflecting elements can lead to a squared power scaling order (i.e., $\mathcal{O}(N^2)$) of the reflected signal, which is even higher than that  ($\mathcal{O}(N)$) of the active arrays.
 Furthermore, \cite{Han2022DoubleIRS} proposed a double-PIRS system where $N$ reflecting elements are equally allocated over two distributed PIRSs, which yields an even higher power scaling order ($\mathcal{O}(N^4)$) as compared to the single-PIRS system.
 Recently, wireless systems aided by multiple PIRSs with multiple signal reflections among them have also been investigated (see \cite{Mei2022Multireflection} and the references therein).
 Additionally, transmitting passive surfaces were proposed in \cite{Liu2022Compact}, where they are enclosed on the user's side to allow the incident signal penetration and provide passive beamforming gain for enhancing the performance.
 Despite the low cost of PIRSs, their signal coverage performance is severely limited by the product-distance path-loss of the cascaded multi-reflection channel\cite{Mei2022Multireflection}.  
 As a result, one PIRS needs to be allocated massive elements and/or deployed near the transmitter (Tx), the receiver (Rx), or other PIRSs in practice to make its signal reflection effective \cite{Mei2022Multireflection}.   
 
 \begin{figure}[t]
 	\centering
 	\includegraphics[scale=0.17]{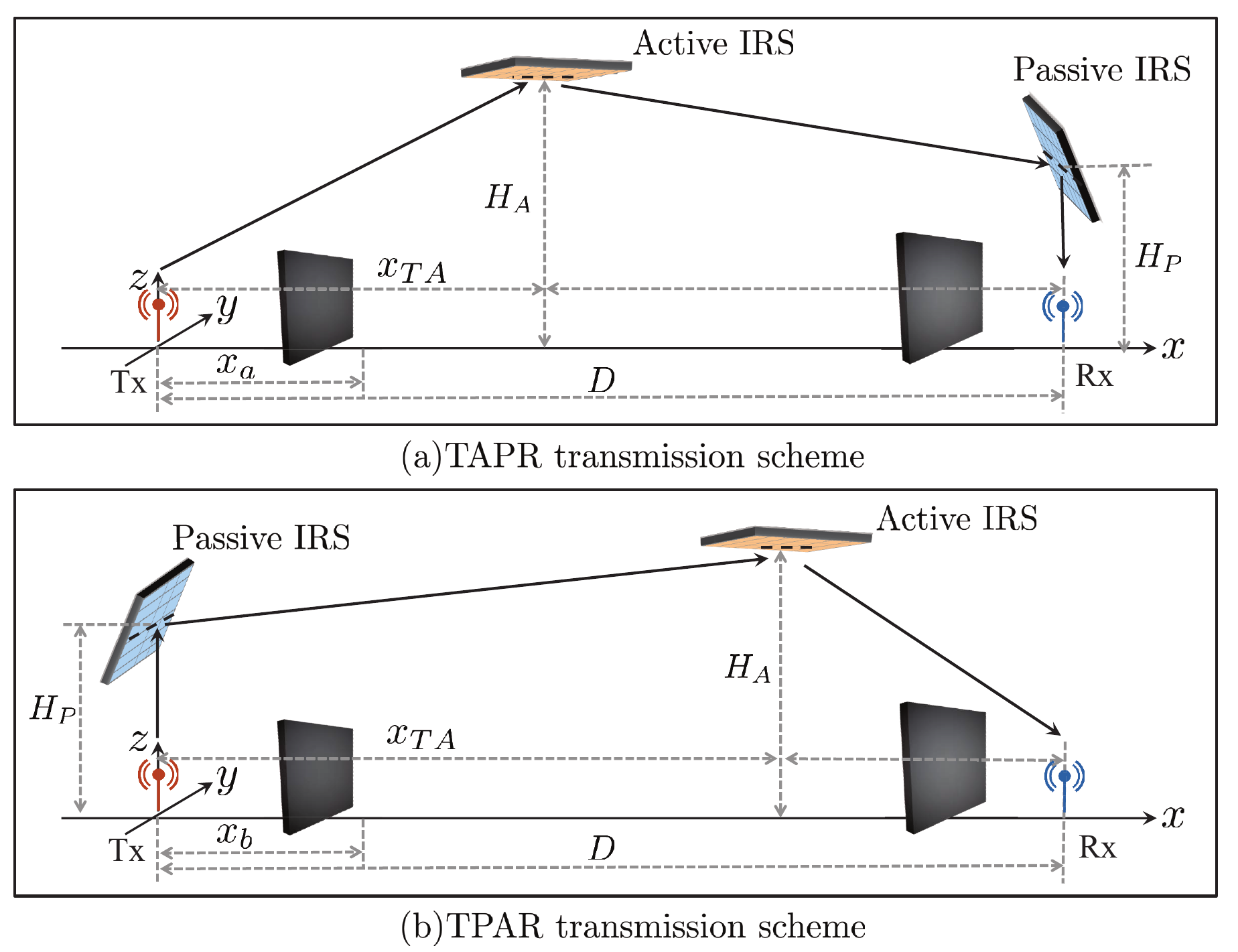}
 	\vspace{-3mm}		
 	\caption{The illustrations of AIRS-PIRS jointly aided wireless communication.} \label{Fig:systemmodel}
 	\vspace{-7mm}
 \end{figure}

 To address this issue with PIRS, the active IRS (AIRS) has emerged as a promising solution \cite{zhang2021active, Long2021active, You2021active, Khoshafa2021a, Liu2022Active, Zhi2022Active}.
 With negative resistance components connected to an additional power supply, AIRSs can adjust phase shifts as well as amplitude amplification with a gain exceeding one.
 Compared to PIRSs, although AIRSs offer an amplification gain (typically lower than that of active relays with dedicated RF amplifiers), they also induce non-negligible amplification noise in their reflected signals \cite{Long2021active}.
 In particular, given the IRS location and the total power budget, an AIRS was shown to achieve higher spectral efficiency\cite{Long2021active}, energy efficiency\cite{Liu2022Active}, and reliability\cite{Khoshafa2021a} than its PIRS counterpart.
 Furthermore, with optimized AIRS/PIRS placements, \cite{You2021active} showed that an AIRS only provides a power scaling of $\mathcal{O}(N)$ with $N$ active reflecting elements due to the amplification noise, but it achieves a higher rate than the PIRS when $N$ is small or the AIRS amplification power is high.
 Nevertheless, the AIRS system design (e.g., AIRS deployment and amplification factor optimization) is more complicated to mitigate noise amplification introduced by the AIRS.
 		Existing research on AIRS has mainly focused on comparing a single AIRS \cite{zhang2021active,Long2021active, You2021active, Khoshafa2021a, Liu2022Active, Zhi2022Active} or a hybrid IRS comprising active and passive elements \cite{Nguyen2021Hybrid} to a single PIRS.
 		However, the joint use of distributed AIRS and PIRS in practically challenging scenarios \cite{Mei2022Multireflection} such as multi-turn corridors in an indoor environment has not been studied yet.
 		
 To address the above problem, this letter proposes a new wireless communication system jointly aided by a pair of AIRS and PIRS to provide a double-reflection link between a single-antenna Tx and a single-antenna Rx via the two IRSs, while the other links are assumed to be blocked by obstacles.
 Additionally, we assume that the two IRSs can be properly deployed so that a cascaded line-of-sight (LOS) connection between the Tx and Rx can be established successively through them.
 Considering two deployment orders for the two IRSs, we present two practical transmission schemes, which are Tx$\rightarrow$PIRS$\rightarrow$AIRS$\rightarrow$Rx (TPAR) and Tx$\rightarrow$AIRS$\rightarrow$PIRS$\rightarrow$Rx (TAPR) schemes, respectively (see Fig. \ref{Fig:systemmodel}).
 Under the above settings, we aim to characterize the achievable rates for the two schemes with their respective optimized AIRS/PIRS placement.
 Our analytic results show that for both schemes, when the number of PIRS elements is small, the AIRS should be placed closer to the Rx with decreasing amplification power.
 Besides, the TAPR scheme generally outperforms the TPAR scheme when the number of PIRS elements and/or the AIRS amplification power is small.
 Simulation results validate our analysis and demonstrate that the jointly optimized AIRS/PIRS deployment achieves a much higher rate than the existing benchmarks under the same power and IRS element budgets.

\vspace{-3mm}
 \section{System  Model and Problem Formulation}\label{Section:model}
 As illustrated in Fig. \ref{Fig:systemmodel}, a single-antenna\footnote{
 			For the multiple-antenna case, the maximal ratio transmission/combining techniques can be applied at the Tx/Rx.} Tx transmits data to a single-antenna Rx\footnote{
 			This work can be extended to multi-user systems, but the corresponding multi-access design requires further investigation; an initial study on AIRS-aided multiple access is given in\cite{Chen2022active}.} that is $D$ meters (m) away.
Under a three-dimensional Cartesian coordinate system, the locations of the Tx and Rx are denoted by $\bm u_{T} = (0, 0, 0)$ and $\bm u_{R} = (D, 0, 0)$, respectively.
 A pair of PIRS and AIRS are deployed between them to assist in their communication.
 The PIRS and AIRS are respectively placed at given altitudes, denoted by $H_{P}$ and $H_{A}$, so that the Tx can communicate with the Rx via only a double-reflection link through the two IRSs over three LOS channels.
In particular, we assume that the PIRS is fixed above the Tx or Rx to minimize the path loss, and the AIRS can be flexibly placed between them to maximize the channel gain.
Thus, two transmission schemes\footnote{These two schemes can be practically used for downlink and uplink communications; however, the different deployment order of AIRS and PIRS generally results in different weighted sum-rate performance.} are considered by deploying the PIRS and AIRS in different orders, i.e., TAPR shown in Fig. 1(a) and TPAR shown in Fig. 1(b).

  Let $N_{p}$ and $N_{a}$ denote the number of passive/active reflecting elements on the PIRS/AIRS, respectively.
 We further define $\mathcal{N}_{a}$ and $\mathcal{N}_{p}$ as the sets containing all the elements on the AIRS and PIRS, respectively.
 Let $\bm \Theta_{P}=\text{diag}(\eta_{P,1} e^{j\phi_{P,1}},\ldots, \eta_{{P,N_{p}}} e^{j\phi_{P,N_{p}}})$ and $\bm \Phi_{A}=\text{diag}(\eta_{A,1} e^{j\phi_{A,1}},\ldots, \eta_{{A,N_{a}}} e^{j\phi_{A,N_{a}}})$ denote the reflection matrices of PIRS and AIRS respectively, where $\eta_{P,n}/\eta_{A,n}$ and $\phi_{P,n}/\phi_{A,n} \in [0, 2\pi)$\footnote{
 			The results in this letter can be extended to the practical setup with discrete phase-shift levels (e.g., by applying the quantization technique\cite{Fu2021RIS}).
 } denote respectively the passive/active reflection amplitude and passive/active phase-shift at element $n$.
 Therein, for maximal reflection of the PIRS and ease of practical implementation \cite{Wu2021Tutorial}, we further set $\eta_{P, n}=1, \forall n\in\mathcal{N}_{p}$.
 Furthermore, since the transmitted signal arriving at all active elements experiences the same path loss under LOS channels, a common amplification factor $\eta$ should be used to maximize the rate performance as in\cite{Long2021active, Liu2022Active}, i.e., $\eta_{A, n} = \eta\geq 1, \forall n\in \mathcal{N}_{a}$, which also simplifies the design of amplification factors of different IRS elements in practice.
 Moreover, different from the PIRS that reflects signals without incurring amplification noise, the AIRS generates non-negligible amplification noise at all reflecting elements, which is denoted by $\boldsymbol{n}_{\rm a}\in\mathbb{C}^{N_{a}\times 1}$ and assumed to follow the independent circularly symmetric complex Gaussian distribution, i.e., $\boldsymbol{n}_{\rm a}\sim \mathcal{CN}(\boldsymbol{0}_{N_{a}}, \sigma^2_{\rm F}{\bf I}_{N_{a}})$ with power $\sigma^2_{\rm F}$.

 \vspace{-3mm}
 \subsection{TAPR Transmission Scheme}
 First, we consider the TAPR scheme with the PIRS fixed above the Rx, as shown in Fig. 1(a). 
 \subsubsection{Signal model}
 Denoting the Tx-AIRS horizontal distance as $x_{TA}$(a design variable), the locations of the PIRS/AIRS are represented by $\bm u_{P} = (D, 0 , H_{P})$ and $\bm u_{A} = (x_{TA}, 0, H_{A})$, respectively.
 Thus, the distances between the Rx and PIRS, the Tx and AIRS, and the AIRS and PIRS are respectively given by
 \setlength\arraycolsep{2pt}
 \begin{eqnarray}
 	&&  d_{PR} =  H_{P},	d_{TA}(x_{TA}) = \sqrt{ x_{TA}^2 + H_{A}^2}, \\
 	&&	d_{AP}(x_{TA}) = \sqrt{(D - x_{TA})^2 + (H_{P}-H_{A})^2}.
 \end{eqnarray}

 		Let $\bm g_{\rm TA} \in \mathbb{C}^{N_{a}\times1}$, $\bm S_{\rm AP} \in \mathbb{C}^{N_{p}\times N_{a}}$, and $\bm h_{\rm RP}^{\sf H} \in \mathbb{C}^{1\times N_{p}}$ denote the baseband equivalent LOS channels for Tx$\rightarrow$AIRS, AIRS$\rightarrow$PIRS, and PIRS$\rightarrow$Rx links, respectively.
 Before modeling LOS channels, we first define the one-dimensional steering vector function for a uniform linear array (ULA) as 
 \setlength\arraycolsep{2pt}
 \begin{eqnarray}\label{Eq:steering-vector-ULA}
 	\bm {e}(\theta, M') = [1, e^{-j\pi\theta}, \ldots, e^{-j\pi(M' -1) \theta}]^{\sf T} \in \mathbb{C}^{M' \times 1},
 \end{eqnarray}
 where $\theta$ denotes the constant phase-shift difference between the signals at two adjacent elements and $M'$ denotes the size of the ULA. 
 For the uniform planar array (UPA) model, each array response vector can be expressed as the Kronecker product of two steering vector functions in the x-axis and y-axis directions, respectively.
 For example, the array response vector for the UPA from the Tx to AIRS is expressed as
 \begin{eqnarray}\label{Eq:steering-vector-UPA}
 	&&\!\!\!\!\!\bm a_{TA}(\theta_{TA},\vartheta_{TA}, N_{a}) = \bm {e}(\frac{2\Delta_A}{\lambda}\sin(\theta_{TA})\cos(\vartheta_{TA}), N_{a,x}) \otimes \nonumber\\
 	&& \hspace{4em}\bm {e}(\frac{2\Delta_A}{\lambda}\sin(\theta_{TA})\sin(\vartheta_{TA}), N_{a,y})
 	\in \mathbb{C}^{N_{a} \times 1},
 \end{eqnarray}
 where $\lambda$ denotes the signal wavelength, $\Delta_A$ denotes the element space at the AIRS, $(\theta_{TA},\vartheta_{TA})$ is the angle-of-arrival pair, and $N_{a,x}$ and $N_{a,y}$ denote the number of horizontal/vertical elements at the AIRS with $N_{a} = N_{a,x} \times N_{a,y}$.
 As such, $\bm S_{\rm PA}$, $\bm g_{\rm AT}$,  and $\bm h_{\rm PR}^{\sf H}$ are respectively given by
 \begin{eqnarray}\label{Eq:LoS model}
 \!\!\!\!\!\!\!\!\!\!	&&	\bm S_{\rm AP} \!=\! {\sqrt{\beta}}/{d_{AP} }e^{\frac{-j2\pi d_{AP} }{\lambda}} \bar{\bm s}_{AP} \tilde{\bm s}_{PA}^{\sf H}, \\
\!\!\!\!\!\!\!\!\!\!	&&	\bm g_{\rm TA} \!=\! {\sqrt{\beta}}/{d_{TA} }e^{\frac{-j2\pi d_{TA} }{\lambda}}\bar{\bm g}_{TA}, \bm h_{\rm RP}^{\sf H} \!=\! {\sqrt{\beta}}/{H_P}e^{\frac{-j2\pi H_p }{\lambda}}\bar{\bm h}_{RP}^{\sf H}, 
 \end{eqnarray}
 where $\beta$ denotes the reference channel power at a distance of 1 m, $\bar{\bm s}_{AP} = \bm a_{AP}(\theta_{AP},\vartheta_{AP}, N_{p})$,
 $\tilde{\bm s}_{PA} = \bm a_{PA}(\theta_{PA},\vartheta_{PA}, N_{a})$,
 $ \bar{\bm g}_{TA} = \bm a_{TA}(\theta_{TA},\vartheta_{TA}, N_{a}) $, and $ \bar{\bm h}_{RP}= \bm a_{RP}(\theta_{RP},\vartheta_{RP}, N_{p}) $.
In this letter, we assume perfect channel state information (CSI) is available\footnote{ 
 			The existing channel estimation techniques proposed for double-IRS-aided systems (see e.g., \cite{Mei2022Multireflection}) can be applied to obtain the CSI in our considered system.
 }.
 
 Though the Tx$\rightarrow$AIRS$\rightarrow$PIRS$\rightarrow$Rx reflecting link, the signal received at the Rx is given by 
 \begin{eqnarray}\label{Eq:receive-TAPR}
 	\!\!\!\!\!\! y_{a} &=& \bm h_{\rm RP}^{\sf{H}}\bm \Theta_{P}\bm S_{\rm AP}\eta\bm \Theta_{A}\Big(\bm g_{\rm TA}s + \bm n_{\rm a}\Big) +e \nonumber\\
 	\!\!\!\!\!\! &=& \underbrace{\bm h_{\rm RP}^{\sf{H}}\bm \Theta_{P}\bm S_{\rm AP}\eta\bm \Theta_{A}\bm g_{\rm TA}s}_{\text{Double-reflected signal}} + \underbrace{\bm h_{\rm RP}^{\sf{H}}\bm \Theta_{P}\bm S_{\rm AP}\eta\bm \Theta_{A} \bm n_{\rm a}}_{\text{Noise introduced by the AIRS}} + e,
 \end{eqnarray}
 where $s$ denotes the transmitted signal with power $P_{\rm t}$ of the Tx, and $e$ denotes the additive white Gaussian noise at the Rx with power $\sigma^2$.
 As observed from \eqref{Eq:receive-TAPR}, the two IRSs can jointly enhance the double-reflected signal power whereas the AIRS introduces additional amplification noise.
 The received signal-to-noise ratio (SNR) at the Rx is expressed as
 \begin{eqnarray}\label{SNR-TAPR}
 	\!\!{\sf SNR}_{a}(\bm \Theta_{A},\bm \Theta_{P}, \eta, x_{TA}) = \frac{\|\bm h_{\rm RP}^{\sf{H}}\bm \Theta_{P} \bm S_{\rm AP}\eta\bm \Theta_{A}\bm g_{\rm TA}\|^2P_{\rm t}}{\|\bm h_{\rm RP}^{\sf{H}}\bm \Theta_{P}\bm S_{\rm AP}\eta\bm \Theta_{A}\|^2\sigma^2_{\rm F} + \sigma^2},
 \end{eqnarray}
 and  the achievable rate (in bits per second per Hertz or bps/Hz)  for the TAPR scheme is given by
 \begin{eqnarray}\label{rate-TAPR}
 	{\sf R}_{a} = \log_2\left(1+ {\sf SNR}_{a}(\bm \Theta_{A},\bm \Theta_{P}, \eta, x_{TA})\right).
 \end{eqnarray}
		To ensure its reflection amplifier operating in an amplification mode \cite{Amato2018Tunneling} so as to improve the performance over PIRS,
		the AIRS should be properly deployed to meet the following constraint:
		\begin{eqnarray}
			\eta \geq 1.\label{Cons:activeIRS1-TAPR}
		\end{eqnarray}
 Denoting $P_{\rm F}$ as the given amplification power budget of the AIRS, the amplification power constraint over the signal reflected by the AIRS is given by \cite{You2021active,Long2021active}
 \begin{eqnarray}
 		\eta^2\left(P_{\rm{t}} \| \bm \Theta_A \bm g_{\rm TA}\|^2+\sigma_{\rm F}^2 \Vert { \bm\Theta_A} {\bf I}_{N_{a}}\|^2\right) \le P_{\rm F}. \label{Cons:activeIRS2-TAPR}
 \end{eqnarray}
\subsubsection{Problem formulation}
We aim to maximize the achievable rate for the Rx by jointly optimizing the reflection matrices at the PIRS/AIRS (i.e., $\bm \Theta_{P}$ and $\eta\bm \Theta_{A}$) and the AIRS deployment (i.e., $x_{TA}$).
Accordingly, for the TAPR scheme, the optimization problem is formulated as
\begin{subequations}\label{problem:TAPR}
	\begin{eqnarray}
		\hspace{-5mm} \mathop { \text{maximize} }\limits_{\substack{\bm \Theta_{A},\bm \Theta_{P}, \eta,\\ x_{TA} }}&& \log_2\left(1+ {\sf SNR_a}(\bm \Theta_{A},\bm \Theta_{P}, \eta, x_{TA})\right) \nonumber \\
		\hspace{-5mm} \text{subject to}
		&& |\bm \Theta_{P}(n,n)| = 1, \forall n \in {{\mathcal{N}_p}}, \label{Cons:PIRS-phase} \\
		&& |\bm \Theta_{A}(n,n)| = 1, \forall n \in {{\mathcal{N}_a}}, \label{Cons:AIRS-phase}\\
		&& 0	\leq x_{TA}\leq D,  \\
		&& \text{Constraints~}
		 \eqref{Cons:activeIRS1-TAPR},~ \eqref{Cons:activeIRS2-TAPR}. \nonumber 
	\end{eqnarray}
\end{subequations}
Since the objective function is increasing with $\eta$, the amplification power constraint in \eqref{Cons:activeIRS2-TAPR} needs to be active at the optimal solution of problem \eqref{problem:TAPR}.
Furthermore, constraint \eqref{Cons:activeIRS2-TAPR} is independent of the phase values in $\bm \Theta_{A}$ and $\bm \Theta_{P}$.
		In addition, since the objective function is increasing with $\|\bm h_{\rm RP}^{\sf{H}}\bm \Theta_{P}  \bar{\bm s}_{\rm AP}\|^2$ and its denominator is independent of the phase value of $\bm \Theta_{A}$, it is maximized by designing the phases of the AIRS and PIRS to align in the cascaded Tx-AIRS-PIRS-Rx channel.
Thus, for given $x_{TA}$, the optimal phase design of the AIRS/PIRS and amplification factor are respectively given by
\begin{eqnarray}
	&& \bm \Theta_{A}(n,n) = e^{-j(\angle[ \tilde{\bm s}_{PA}^{\sf H}]_n + \angle[\bm g_{\rm TA}]_n)}, \label{solution:active-phase2} \\
	&& \bm \Theta_{P}(n,n) = e^{-j(\angle[\bar{\bm s}_{AP}]_n + \angle[\bm h^{\sf H}_{\rm RP}]_n )}, \label{solution:passive-phase2}\\
	&& 
	\eta = \sqrt{P_{\rm F}d^2_{TA}(x_{TA})/(P_{\rm{t}}N_a\beta +d^2_{TA}(x_{TA})\sigma_{\rm F}^2 N_a}). \label{solution:amflifyfactor2}
\end{eqnarray}
\begingroup
\allowdisplaybreaks

Combining \eqref{solution:active-phase2}, \eqref{solution:passive-phase2}, \eqref{solution:amflifyfactor2} into \eqref{SNR-TAPR} yields
\begin{eqnarray} \label{Eq:TAPR-SNR-dis}
	&&\!\!\!\!\!{\sf SNR}_{a}(x_{TA}) \nonumber \\
	&& \!\!\!\!\!= \frac{P_{\rm{t}}P_{\rm F}\beta^2N_{a}N_{p}^2}{C_{1}d^2_{TA}(x_{TA})+ C_{2}d^2_{AP}(x_{TA})+C_{3} d^2_{TA}(x_{TA}) d^2_{AP}(x_{TA}) }, \nonumber\\
\end{eqnarray}
where $C_{1} = \sigma_{F}^2P_{\rm F}\beta N_{p}^2$, $C_{2} = \sigma^2P_{\rm{t}}H_P^2$, and $C_{3} = H_P^2\sigma_{\rm F}^2\sigma^2/\beta$.
After substituting \eqref{solution:active-phase2}, \eqref{solution:passive-phase2}, and \eqref{solution:amflifyfactor2} into \eqref{problem:TAPR}, problem \eqref{problem:TAPR} is equivalent to the following, 
\begin{subequations}\label{problem:TAPR-reduced}
	\begin{eqnarray}
		\mathop{ \text{maximize}}\limits_{x_{TA} } && 
		\log_2(1+{\sf SNR}_{a}(x_{TA}) )\nonumber \\
		\text{subject to}
		&&  x_{a} \leq x_{TA} \leq D,	\label{Cons:deployment1}	
	\end{eqnarray}
\end{subequations}
where constraint \eqref{Cons:deployment1} corresponds to  \eqref{Cons:activeIRS1-TAPR} in problem \eqref{problem:TAPR}, 
\begin{eqnarray}
	x_{a} = \sqrt{\max\{0, N_{a}\beta P_{\rm{t}}/( P_{\rm F}- N_{a}\sigma_{\rm F}^2)-H_{A}^2\}}.
\end{eqnarray} 
Although the optimal solution to problem \eqref{problem:TAPR-reduced} is difficult to be characterized in closed form due to the complicated expression of ${\sf SNR}_{a}(x_{TA})$, its value can be obtained by using a one-dimensional search over $x_{TA} \in [x_{a}, D]$.
In the following, we first characterize the effects of AIRS amplification power and the number of PIRS elements on the optimal AIRS placement.
\begin{lemma}\label{lemma:TAPR-distance}
	The optimal Tx-AIRS horizontal distance (i.e., $x^{\star}_{TA}$) to problem \eqref{problem:TAPR-reduced} is \textit{non-increasing} with $P_{\rm F}$ and monotonically \textit{decreasing} with $N_{p}$.
\end{lemma}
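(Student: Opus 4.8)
The plan is to turn the rate maximization \eqref{problem:TAPR-reduced} into the minimization of the denominator of ${\sf SNR}_{a}(x_{TA})$ in \eqref{Eq:TAPR-SNR-dis} and then obtain the monotonicity of its minimizer by a submodularity (revealed-preference) argument, thereby avoiding any closed-form characterization. Since $\log_2(1+\cdot)$ is increasing in its argument and the numerator $P_{\rm t}P_{\rm F}\beta^2N_{a}N_{p}^2$ of ${\sf SNR}_{a}(x_{TA})$ is independent of $x_{TA}$, maximizing ${\sf R}_{a}$ over $x_{TA}\in[x_{a},D]$ is equivalent to minimizing
\begin{eqnarray}
	g(x_{TA};C_{1}) &=& C_{1}\,d_{TA}^2(x_{TA})+C_{2}\,d_{AP}^2(x_{TA}) \nonumber\\
	&& +\,C_{3}\,d_{TA}^2(x_{TA})\,d_{AP}^2(x_{TA})
\end{eqnarray}
over the same interval. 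The key structural fact is that, among $C_{1},C_{2},C_{3}$, only $C_{1}=\sigma_{\rm F}^2P_{\rm F}\beta N_{p}^2$ depends on the parameters of interest, it is strictly increasing in both $P_{\rm F}$ and $N_{p}$, and it multiplies $d_{TA}^2(x_{TA})=x_{TA}^2+H_{A}^2$, which is strictly increasing in $x_{TA}$ on $[0,D]$.

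I would then establish that the minimizer of $g$ is non-increasing in $C_{1}$ on a fixed feasible interval. Take $C_{1}<C_{1}'$ with respective minimizers $x$ and $x'$; optimality gives $g(x;C_{1})\le g(x';C_{1})$ and $g(x';C_{1}')\le g(x;C_{1}')$, and adding these while using $g(\cdot;C_{1})-g(\cdot;C_{1}')=(C_{1}-C_{1}')\,d_{TA}^2(\cdot)$ yields $(C_{1}-C_{1}')\big(d_{TA}^2(x)-d_{TA}^2(x')\big)\le 0$, hence $d_{TA}^2(x)\ge d_{TA}^2(x')$ and therefore $x\ge x'$ by monotonicity of $d_{TA}^2$. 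For the $N_{p}$ statement the feasible interval $[x_{a},D]$ does not depend on $N_{p}$ (as $x_{a}$ involves only $P_{\rm F}$ and $N_{a}$), so this argument applies directly and gives that $x^{\star}_{TA}$ decreases with $N_{p}$; to sharpen ``non-increasing'' to ``decreasing'' I would confirm that the stationary point is interior---using that $\partial g/\partial x_{TA}<0$ at $x_{TA}=0$ and $\partial g/\partial x_{TA}>0$ at $x_{TA}=D$---so that, by the implicit function theorem, the interior minimizer satisfies $dx_{TA}^{\star}/dC_{1}=-2x_{TA}^{\star}/(\partial^2 g/\partial x_{TA}^2)<0$ strictly. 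For the $P_{\rm F}$ statement the coefficient effect is identical, but now the feasible interval also changes, its left endpoint $x_{a}$ being non-increasing in $P_{\rm F}$; since lowering $x_{a}$ can only relax the problem toward smaller $x_{TA}$, both effects point the same way and yield the (weaker) conclusion that $x^{\star}_{TA}$ is non-increasing in $P_{\rm F}$.

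The main obstacle is that $g$ is a quartic in $x_{TA}$ whose second derivative need not be positive (its sign depends on the relative magnitude of $C_{3}$), so $g$ is not convex, its minimizer is not available in closed form, and uniqueness cannot be assumed a priori. This is precisely why I would rely on the revealed-preference inequality above rather than a direct stationarity computation: it delivers monotonicity of the entire $\argmin$ correspondence without convexity or uniqueness. The two remaining points requiring care are (i) verifying interiority of the stationary point to upgrade the $N_{p}$ conclusion to a strict decrease, and (ii) correctly combining the coefficient shift in $C_{1}$ with the simultaneous motion of the constraint boundary $x_{a}(P_{\rm F})$ in the $P_{\rm F}$ case, which is what leaves that statement at ``non-increasing'' rather than strictly decreasing.
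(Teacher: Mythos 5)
Your proposal is correct, and it is essentially a rigorous formalization of an argument that the paper only sketches. The paper's proof rests on the same structural observation you isolate---among $C_{1},C_{2},C_{3}$ only $C_{1}$ depends on $P_{\rm F}$ and $N_{p}$, and it multiplies the increasing function $d_{TA}^2(x_{TA})$---but it then simply \emph{asserts} that ``$x_{TA}$ should be decreased'' as $P_{\rm F}$ or $N_{p}$ increases when constraint \eqref{Cons:deployment1} is ignored, and separately notes that $x_{a}$ is non-increasing in $P_{\rm F}$ and independent of $N_{p}$. You actually prove that key step: the revealed-preference inequality $(C_{1}-C_{1}')\bigl(d_{TA}^2(x)-d_{TA}^2(x')\bigr)\le 0$ yields monotonicity of the entire set of minimizers without convexity, uniqueness, or any closed form, which matters here precisely because the denominator is a nonconvex quartic in $x_{TA}$; and your two-case handling of the moving endpoint in the $P_{\rm F}$ statement (either $x'\ge x_{a}$ so both points are feasible for both problems and revealed preference applies, or $x'<x_{a}\le x$ trivially) makes rigorous what the paper only combines verbally. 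What the paper's version buys is brevity and physical intuition (balancing noise amplification against path loss); what yours buys is an airtight argument that survives non-uniqueness of the optimizer.

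One caveat, which affects the paper's proof at least as much as yours: the claimed \emph{strict} decrease in $N_{p}$ cannot hold in every regime. If the unconstrained minimizer falls below $x_{a}>0$, the constrained optimum sits at the endpoint $x_{a}$, which is independent of $N_{p}$, so $x^{\star}_{TA}$ is then constant rather than strictly decreasing. Your proposed interiority check is sound as far as it goes---indeed $g'(0)=-2D\bigl(C_{2}+C_{3}H_{A}^2\bigr)<0$ and $g'(D)=2D\bigl(C_{1}+C_{3}(H_{P}-H_{A})^2\bigr)>0$---but it only excludes the endpoints of $[0,D]$, not the actual left endpoint $x_{a}$ of the feasible set in \eqref{Cons:deployment1}; moreover, the implicit-function step needs $g''(x^{\star})>0$, which a nonconvex quartic does not guarantee at every minimizer. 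So the upgrade from ``non-increasing'' to ``decreasing'' in $N_{p}$ holds under the implicit assumption that the amplification constraint is inactive (i.e., $x^{\star}_{TA}>x_{a}$) and the minimizer is nondegenerate---an assumption the paper makes silently, and which you would do well to state explicitly.
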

\begin{proof}
	First, it can be shown that $x_{a}$ in \eqref{Cons:deployment1} is non-increasing with $P_{\rm F}$.
	In addition, it is observed that $C_{1}$ and $d_{TA}(x_{TA})$ in ${\sf SNR}_{a}(x_{TA})$ monotonically increase with both $P_{\rm F}$ and $x_{TA}$ while $d_{AP}(x_{TA})$ in ${\sf SNR}_{a}(x_{TA})$ decreases with $x_{TA}$. 
	As a result, supposing without constraint \eqref{Cons:deployment1}, when $P_{\rm F}$ increases, the optimal solution $x^{\star}_{TA}$ should be decreased to guarantee the first derivative of the objective function to be zero.
	Combining the above, $x^{\star}_{TA}$ is non-increasing with $P_{\rm F}$.
	As for $N_{p}$, it can be shown that $x_{a}$ is independent of $N_{p}$ while $C_{1}$ monotonically increases with $N_{p}$.
    Similarly, it can be concluded that $x^{\star}_{TA}$ is decreasing with $N_{p}$. 
		This thus completes the proof.
\end{proof}

First, as observed from \eqref{solution:amflifyfactor2}, amplification factor $\eta$ is proportional to $x_{TA}$ and $P_{\rm F}$.
Based on \eqref{SNR-TAPR}, although the received signal power increases with $\eta$ and $N_{p}$, the AIRS-induced noise is also increased.
Thus, from Lemma \ref{lemma:TAPR-distance}, as $P_{\rm F}$ and $N_{p}$ increase, the AIRS has to be deployed closer to the Tx to attenuate the AIRS-PIRS channel gain, thus suppressing its induced noise power at the Rx.

\vspace{-3mm}
\subsection{TPAR Transmission Scheme}
Next, we consider TPAR with the PIRS  fixed above the Tx, as shown in Fig. \ref{Fig:systemmodel}(b). Similarly to TAPR, denoting the horizontal distance between the Tx and AIRS as $x_{TA}$, the locations of the PIRS/AIRS are represented by $\bm u_{P} = (0, 0, H_{P})$ and $\bm u_{A} = (x_{TA}, 0, H_{A})$, respectively.
The signal model of TPAR is also similar to that of TAPR, while the expressions of the AIRS amplification noise and its power constraint are different since the order of the AIRS/PIRS placements is reversed.
The detailed expressions are omitted for brevity.

Similarly to the design of $\bm \Theta_{A}$ and $\bm \Theta_{P}$ in TAPR, the amplification factor and receiver SNR in TPAR are given by
\begin{eqnarray}
	&&\eta = \sqrt{P_{\rm F}d^2_{AP}/(P_{\rm{t}}N_a\beta^2N_{p}^2/H_{P}^2 +d^2_{AP}\sigma_{\rm F}^2 N_a)}, \label{solution:eta-TPAR} \\
	&&\!\!\!\!\!{\sf SNR}_{b}(x_{TA}) \nonumber \\
	=&&\!\!\!\!\! \frac{N_a\beta^2 N_{p}^2 P_{\rm t}P_{\rm F}}{C_{2}/ad^2_{AP}(x_{TA}) \!+\!aC_{1} d^2_{AR}(x_{TA}) \!+\!C_{3}d^2_{AP}(x_{TA})d^2_{AR}(x_{TA}) }, \nonumber \label{Eq:TPAR-SNR-dis} \\
\end{eqnarray}
where $a = \frac{P_t \sigma^2}{P_{\rm F} \sigma_{\rm F}^2}$.
As a result, the optimization problem of the TPAR scheme is formulated as
\begin{subequations}\label{problem:TPAR-reduced}
	\begin{eqnarray}
		\mathop { \text{maximize} }\limits_{\ x_{TA} }&& \log_2(1+{\sf SNR}_{b}( x_{TA})) \nonumber \\
		\text{subject to}
		&& x_{b} \leq x_{TA} \leq D, \label{Cons:TPAR-deployment}	
	\end{eqnarray}
\end{subequations}
where constraint \eqref{Cons:TPAR-deployment} is due to the amplification power constraint at the AIRS, and 
\begin{eqnarray}
	\! x_{b} \!=\! \sqrt{\!\max\{0, N_{a}N_{p}^2\beta^2 P_{\rm{t}}/(( P_{\rm F}\!-\! N_{a}\sigma_{\rm F}^2)H_P^2) \!-\!(H_P\!-\!H_A)^2\}}. \nonumber
\end{eqnarray}
Similarly to problem \eqref{problem:TAPR-reduced}, the optimal solution of problem \eqref{problem:TPAR-reduced} can also be obtained by using the one-dimensional search over $x_{TA} \in [x_{b}, D]$.
Likewise, we have the following lemma for the TPAR scheme.
\begin{lemma}\label{lemma:TPAR-distance}
	The optimal Tx-AIRS horizontal distance (i.e., $x^{\star}_{TA}$) to problem \eqref{problem:TPAR-reduced} is \textit{non-increasing} with $P_{\rm F}$ and monotonically \textit{increasing} with $N_{p}$.
\end{lemma}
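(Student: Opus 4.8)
The plan is to mirror the proof of Lemma~\ref{lemma:TAPR-distance}: reduce the rate maximization to minimizing the $x_{TA}$-dependent part of the denominator of ${\sf SNR}_{b}(x_{TA})$, and then track how $P_{\rm F}$ and $N_p$ shift both the interior minimizer and the feasible boundary $x_{b}$. First I would substitute $a = P_{\rm t}\sigma^2/(P_{\rm F}\sigma_{\rm F}^2)$ together with $C_1,C_2,C_3$ into the denominator to expose the parameter dependence of its coefficients. Writing $d_{AP}^2(x_{TA}) = x_{TA}^2 + (H_P-H_A)^2$ (PIRS--AIRS) and $d_{AR}^2(x_{TA}) = (D-x_{TA})^2 + H_A^2$ (AIRS--Rx), the denominator becomes
\begin{eqnarray}
	f(x_{TA}) &=& H_P^2 P_{\rm F}\sigma_{\rm F}^2\, d_{AP}^2(x_{TA}) + P_{\rm t}\sigma^2\beta N_p^2\, d_{AR}^2(x_{TA}) \nonumber\\
	&& + \frac{H_P^2\sigma_{\rm F}^2\sigma^2}{\beta}\, d_{AP}^2(x_{TA})\,d_{AR}^2(x_{TA}).
\end{eqnarray}
Since the numerator $N_a\beta^2 N_p^2 P_{\rm t}P_{\rm F}$ is independent of $x_{TA}$, maximizing ${\sf SNR}_{b}$ over $[x_b,D]$ is equivalent to minimizing $f$ over the same interval, and only the $x_{TA}$-dependence of $f$ governs $x^{\star}_{TA}$.

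The key geometric observation is that $d_{AP}^2$ is strictly increasing in $x_{TA}$ while $d_{AR}^2$ is strictly decreasing on $[0,D]$. For the $P_{\rm F}$ claim, I would note that $P_{\rm F}$ scales only the coefficient of the (increasing) $d_{AP}^2$ term, leaving the other two coefficients unchanged; raising $P_{\rm F}$ therefore penalizes larger $x_{TA}$ more heavily, pushing the unconstrained minimizer downward. Combined with the fact that $x_b$ is itself non-increasing in $P_{\rm F}$ (the bracket $N_aN_p^2\beta^2 P_{\rm t}/((P_{\rm F}-N_a\sigma_{\rm F}^2)H_P^2) - (H_P-H_A)^2$ shrinks as $P_{\rm F}$ grows), the constrained optimum $x^{\star}_{TA}$ is non-increasing in $P_{\rm F}$.

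For the $N_p$ claim, the reversal relative to Lemma~\ref{lemma:TAPR-distance} is the essential point: in TPAR the factor $N_p^2$ multiplies the \emph{AIRS--Rx} term $P_{\rm t}\sigma^2\beta\, d_{AR}^2$, which is \emph{decreasing} in $x_{TA}$, rather than a Tx--AIRS term as in TAPR. Increasing $N_p$ thus rewards larger $x_{TA}$ (smaller $d_{AR}^2$), driving the unconstrained minimizer upward, and since $x_b$ is increasing in $N_p$, the constrained optimum $x^{\star}_{TA}$ is increasing in $N_p$.

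The one step where single-term inspection is not by itself conclusive --- and which I expect to be the main obstacle --- is the third, product term $\tfrac{H_P^2\sigma_{\rm F}^2\sigma^2}{\beta}d_{AP}^2 d_{AR}^2$, which couples the two distances and, being independent of both $P_{\rm F}$ and $N_p$, does not obviously respect the direction suggested by the first two terms. The clean way to close this gap is a monotone comparative-statics argument: $\partial f/\partial P_{\rm F} = H_P^2\sigma_{\rm F}^2\, d_{AP}^2(x_{TA})$ is increasing in $x_{TA}$, so $f$ has increasing differences in $(x_{TA},P_{\rm F})$ and its minimizer is non-increasing in $P_{\rm F}$; similarly $\partial f/\partial N_p = 2N_p P_{\rm t}\sigma^2\beta\, d_{AR}^2(x_{TA})$ is decreasing in $x_{TA}$, giving decreasing differences in $(x_{TA},N_p)$ and a minimizer increasing in $N_p$. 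Crucially, both mixed partials have a definite sign independent of the product term, so the comparative-statics step is exactly where the argument is made rigorous; the monotonicity of $x_b$ established above then guarantees the boundary moves compatibly, so the conclusions survive under the constraint $x_b \le x_{TA} \le D$.
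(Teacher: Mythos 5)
Your proposal is correct, and it is worth noting that it is actually \emph{stronger} than what the paper provides: the paper omits the proof of this lemma entirely, stating only that it is ``similar to that of Lemma~\ref{lemma:TAPR-distance},'' and the proof of Lemma~\ref{lemma:TAPR-distance} itself is an informal single-term inspection (``$C_1$ and $d_{TA}$ increase with $P_{\rm F}$ and $x_{TA}$, hence $x_{TA}$ should decrease''). Your reduction is the same as the paper's template --- rewrite the denominator with the coefficients $C_2/a = H_P^2 P_{\rm F}\sigma_{\rm F}^2$ and $aC_1 = P_{\rm t}\sigma^2\beta N_p^2$ exposed, observe that $P_{\rm F}$ multiplies only the increasing term $d_{AP}^2$ while $N_p^2$ multiplies only the decreasing term $d_{AR}^2$, and check that $x_b$ moves in the compatible direction --- but you correctly identify the hole in that style of argument: the parameter-independent product term $C_3\, d_{AP}^2 d_{AR}^2$ couples the two distances (and makes the objective non-convex in general), so single-term inspection does not by itself pin down how the minimizer moves. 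Closing this with increasing/decreasing differences (the mixed partials $\partial^2 f/\partial x_{TA}\partial P_{\rm F} > 0$ and $\partial^2 f/\partial x_{TA}\partial N_p < 0$ have definite signs regardless of the product term, and the interval $[x_b, D]$ shifts monotonically in the strong set order) is exactly the right tool and turns the paper's heuristic into a rigorous comparative-statics proof. The only caveat, which applies equally to the paper's own statement, is that this argument yields non-increasing/non-decreasing monotonicity rather than the strict monotonicity in $N_p$ claimed in the lemma; strictness would require an additional argument (e.g., that the minimizer is interior and the first-order condition shifts strictly).
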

\begin{proof}
	The proof is similar to that of Lemma \ref{lemma:TAPR-distance}, and thus omitted for brevity.
\end{proof}
Lemma \ref{lemma:TPAR-distance} can be intuitively understood for the TPAR scheme.
The AIRS-induced noise is independent of the number of PIRS reflecting elements.
As observed from \eqref{solution:eta-TPAR}, with a smaller $P_{\rm F}$ and/or a larger $N_p$, the AIRS should be deployed closer to the Rx to provide a higher amplification factor with $\eta >1$ and reduce the path loss of the cascaded channel, thus increasing the signal power at the Rx.

\vspace{-4mm}
\section{Low-Complexity Placement Design and Performance Comparison} \label{Section:solution}
In this section, we present suboptimal AIRS deployment solutions in closed-form for the TAPR and TPAR schemes, respectively.
Based on these results, we compare the performance of the two schemes in terms of achievable rate.

\vspace{-4mm}
\subsection{Suboptimal Solutions}
First, to simplify problems \eqref{problem:TAPR-reduced} and  \eqref{problem:TPAR-reduced}, we present a useful lemma as follows (for which the proof is omitted for brevity).
\begin{lemma}\label{lemma:TPAR-TAPR}
	Given $D \gg \max\{H_{A}, H_{P}\}$, we have
	$C_{1}d^2_{TA}(x_{TA})+ C_{2}d^2_{AP}(x_{TA})\gg C_{3} d^2_{TA}(x_{TA}) d^2_{AP}(x_{TA})$ and $C_{2}/ad^2_{AP}(x_{TA})+aC_{1} d^2_{AR}(x_{TA}) \gg C_{3}d^2_{AP}(x_{TA})d^2_{AR}(x_{TA}) $, $\forall x_{TA}\in [0, D]$, if
	\begin{eqnarray} \label{Condition:distance}
\!\!\!\!	\min\{\!\frac{\sqrt{\beta} N_{p}\sqrt{P_{\rm t}\beta}}{H_p\sigma_{\rm F} } \!+\! \frac{\sqrt{P_{\rm F}\beta}}{\sigma}, \frac{\sqrt{\beta} N_{p}\sqrt{P_{\rm F}\beta}}{H_p\sigma } \!+\! \frac{\sqrt{P_{\rm t}\beta}}{\sigma_{\rm F}}\} \!\gg\! D. 
	\end{eqnarray}
\end{lemma}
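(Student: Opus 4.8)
The plan is to read the informal relation $A \gg B$ as the asymptotic statement $B/A \to 0$ and to prove both displayed inequalities by one common argument, which I spell out for the first (TAPR) inequality; the second (TPAR) one is identical after replacing the pair $(C_1, d_{TA})$ by $(aC_1, d_{AR})$ and the pair $(C_2, d_{AP})$ by $(C_2/a, d_{AP})$. Dividing $C_1 d_{TA}^2(x_{TA}) + C_2 d_{AP}^2(x_{TA}) \gg C_3 d_{TA}^2(x_{TA})d_{AP}^2(x_{TA})$ through by the positive quantity $d_{TA}^2(x_{TA})d_{AP}^2(x_{TA})$, the claim is equivalent to showing that
\[ \frac{C_3 d_{TA}^2(x_{TA}) d_{AP}^2(x_{TA})}{C_1 d_{TA}^2(x_{TA}) + C_2 d_{AP}^2(x_{TA})} = \frac{C_3}{\dfrac{C_1}{d_{AP}^2(x_{TA})} + \dfrac{C_2}{d_{TA}^2(x_{TA})}} \ll 1 \]
uniformly over $x_{TA} \in [0, D]$.

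First I would bound the two distances. For every $x_{TA} \in [0, D]$ we have $d_{TA}^2(x_{TA}) = x_{TA}^2 + H_A^2 \le D^2 + H_A^2$ and $d_{AP}^2(x_{TA}) = (D - x_{TA})^2 + (H_P - H_A)^2 \le D^2 + (H_P - H_A)^2$, so the hypothesis $D \gg \max\{H_A, H_P\}$ makes both at most $D^2(1 + o(1))$. Replacing $d_{AP}^2(x_{TA})$ and $d_{TA}^2(x_{TA})$ by these upper bounds shrinks the denominator $\frac{C_1}{d_{AP}^2(x_{TA})} + \frac{C_2}{d_{TA}^2(x_{TA})}$ and hence enlarges the fraction, yielding the uniform estimate
\[ \frac{C_3}{\frac{C_1}{d_{AP}^2(x_{TA})} + \frac{C_2}{d_{TA}^2(x_{TA})}} \le \frac{C_3 D^2(1+o(1))}{C_1 + C_2} = \frac{D^2(1+o(1))}{C_1/C_3 + C_2/C_3}. \]

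The crux is to recognise that the hypothesis \eqref{Condition:distance} is exactly a condition on $C_1/C_3$ and $C_2/C_3$. Using $C_1 = \sigma_{\rm F}^2 P_{\rm F}\beta N_{p}^2$, $C_2 = \sigma^2 P_{\rm t} H_P^2$ and $C_3 = H_P^2 \sigma_{\rm F}^2 \sigma^2/\beta$, a short computation gives $\sqrt{C_1/C_3} = \sqrt{\beta} N_{p}\sqrt{P_{\rm F}\beta}/(H_P\sigma)$ and $\sqrt{C_2/C_3} = \sqrt{P_{\rm t}\beta}/\sigma_{\rm F}$, so their sum is precisely the second entry of the $\min$ in \eqref{Condition:distance}. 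Applying the elementary bound $u^2 + v^2 \ge \tfrac12 (u+v)^2$ (from $(u-v)^2 \ge 0$) with $u = \sqrt{C_1/C_3}$ and $v = \sqrt{C_2/C_3}$ gives $C_1/C_3 + C_2/C_3 \ge \tfrac12 (\sqrt{C_1/C_3} + \sqrt{C_2/C_3})^2$, and since the latter sum is $\gg D$ by hypothesis, the estimate of the previous paragraph becomes
\[ \frac{D^2(1+o(1))}{C_1/C_3 + C_2/C_3} \le \frac{2 D^2(1+o(1))}{(\sqrt{C_1/C_3} + \sqrt{C_2/C_3})^2} \to 0, \]
which proves the first inequality. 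Repeating the computation with $a = P_{\rm t}\sigma^2/(P_{\rm F}\sigma_{\rm F}^2)$ shows that $\sqrt{aC_1/C_3} + \sqrt{(C_2/a)/C_3}$ equals the first entry of the $\min$, so taking the minimum of the two entries guarantees both are $\gg D$ and hence both inequalities.

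I expect the only genuine subtlety to be the uniformity in $x_{TA}$ rather than the algebra: $d_{TA}$ and $d_{AP}$ cannot both attain their interval maxima at the same $x_{TA}$, but replacing each squared distance by its separate maximum $D^2(1+o(1))$ only inflates the ratio, so the resulting bound holds simultaneously for all $x_{TA} \in [0, D]$. Everything else, namely the identification of the two $\min$ entries with $\sqrt{C_1/C_3}+\sqrt{C_2/C_3}$ and $\sqrt{aC_1/C_3}+\sqrt{(C_2/a)/C_3}$, is routine substitution.
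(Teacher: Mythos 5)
Your proof is correct. Note that the paper itself gives no proof of this lemma (it is explicitly ``omitted for brevity''), so there is nothing to compare against; your argument is the natural one and is fully consistent with how the stated condition \eqref{Condition:distance} is built: the reduction $C_3\big/\bigl(\tfrac{C_1}{d_{AP}^2}+\tfrac{C_2}{d_{TA}^2}\bigr)\ll 1$, the uniform bounds $d_{TA}^2,\,d_{AP}^2\le D^2(1+o(1))$ under $D\gg\max\{H_A,H_P\}$, and the identifications $\sqrt{C_1/C_3}+\sqrt{C_2/C_3}$ and $\sqrt{aC_1/C_3}+\sqrt{(C_2/a)/C_3}$ with the two entries of the $\min$ all check out. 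The only slack in your argument is the factor $2$ from $u^2+v^2\ge\tfrac12(u+v)^2$, which is immaterial for a ``$\gg$'' statement; in fact an exact optimization of $\frac{d_{TA}^2 d_{AP}^2}{C_1 d_{TA}^2+C_2 d_{AP}^2}$ over $x_{TA}$ (with heights neglected) gives the threshold $(C_1^{1/3}+C_2^{1/3})^3/C_3\gg D^2$, which is within a constant factor of the squared sum of roots you use, so the paper's condition is essentially tight and your sufficient condition coincides with it.
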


\subsubsection{TAPR Scheme}
When condition \eqref{Condition:distance} is satisfied, $ \sf{R}_a$ in problem \eqref{problem:TAPR-reduced} can be approximated as $\overline{\sf{R}}_a = \log_2(1+\overline{ \sf{SNR}}_a)$, where
$\overline{ \sf{SNR}}_a = \frac{P_{\rm{t}}P_{\rm{F}}\beta^2N_{a}N_{p}^2}{C_{1}x^2_{TA} + C_{2}(D-x_{TA})^2 }$.
Therefore, problem \eqref{problem:TAPR-reduced} is approximated as
\begin{eqnarray}\label{problem:TAPR-relaxed}
	\mathop { \text{maximize} }\limits_{ x_{TA} } && \log_2\Big(1+\frac{P_{\rm{t}}P_{\rm{F}}\beta^2N_{a}N_{p}^2}{C_{1}x^2_{TA} + C_{2}(D-x_{TA})^2 }\Big) \nonumber \\
	\text{subject to}
	&& \text{constraint~}\eqref{Cons:deployment1}.
\end{eqnarray}
By setting the first derivative of the objective function of \eqref{problem:TAPR-relaxed} with respect to $x_{TA}$ to zero and then projecting it into the feasible set of problem \eqref{problem:TAPR-relaxed}, the optimal solution to problem \eqref{problem:TAPR-relaxed} can be obtained as
\begin{eqnarray}\label{Solution:TAPR-x}
	x^{a}_{TA} 
	  =  \max\{\frac{ aD}{ a + b}, x_{a} \},
\end{eqnarray}
where $a = \frac{P_t \sigma^2}{P_{\rm F} \sigma_{\rm F}^2}$ and $b = \frac{\beta N_{p}^2}{H_P^2}$.
Note that when $\frac{ aD}{ a + b} > x_{a}$, the amplification factor (i.e., $\eta$) in problem \eqref{problem:TAPR} is larger than 1. 
Given $\eta> 1$, the receiver SNR is approximated as
\begin{eqnarray}
\!\!\!\!\!\!\!\!\!\!\!\!\!	&& \overline{ \sf{SNR}}_{a} = \frac{P_{\rm F} \beta N_{a}}{D^2\sigma^2}(a+b) =\frac{\beta N_{a}}{D^2}(\frac{P_{\rm t}  }{\sigma_{\rm F}^2}\!+\!\frac{P_{\rm F}N_{p}^2 \beta }{\sigma^2H_{p}}). \label{SNR:TAPR-appro}
\end{eqnarray}

\subsubsection{TPAR Scheme}
Similarly, based on Lemma \ref{lemma:TPAR-TAPR}, 
problem \eqref{problem:TPAR-reduced} is approximated as
\begin{eqnarray}\label{problem:TPAR-relaxed}
	\mathop { \text{maximize} }\limits_{ x_{TA} }&& \log_2\Big(1+\frac{P_{\rm t}P_{\rm F}\beta^2N_{a}N_{p}^2}{1/aC_{2}x_{TA}^2+ aC_{1}(D-x_{TA})^2}\Big) \nonumber \\
	\text{subject to}
	&&\text{constraint~}\eqref{Cons:TPAR-deployment}.
\end{eqnarray}
Similarly to problem \eqref{problem:TAPR-relaxed}, the optimal solution to problem \eqref{problem:TPAR-relaxed}  is obtained as
\begin{eqnarray}\label{Solution:TPAR-x}
	x^{b}_{TA} 
	  = \max\{\frac{ab D}{ ab + 1}, x_{b} \}.
\end{eqnarray}
Given $\eta> 1$, the approximated receiver SNR is given by
\begin{eqnarray} 
\!\!\!\!\!\!\!\!\!\!\!\!\!	&& \overline{ \sf{SNR}}_{b} =  \frac{P_{\rm F} \beta N_{a}}{D^2\sigma^2}(ab+1) = \frac{\beta N_{a}}{D^2}(\frac{P_{\rm t} N_{p}^2 \beta }{\sigma_{\rm F}^2H_{p}}+\frac{P_{\rm F} }{\sigma^2}). \label{SNR:TPAR-appro}
\end{eqnarray}

\vspace{-4mm}
\subsection{Comparisons between TPAR and TAPR}
With the suboptimal AIRS deployments given in \eqref{Solution:TAPR-x} and \eqref{Solution:TPAR-x}, the achievable rates for the TPAR and TAPR schemes are compared as follows.

\begin{figure*}[htbp!]
	\centering
	\begin{minipage}{.28\textwidth}
		\centering		
		\includegraphics[scale=0.34]{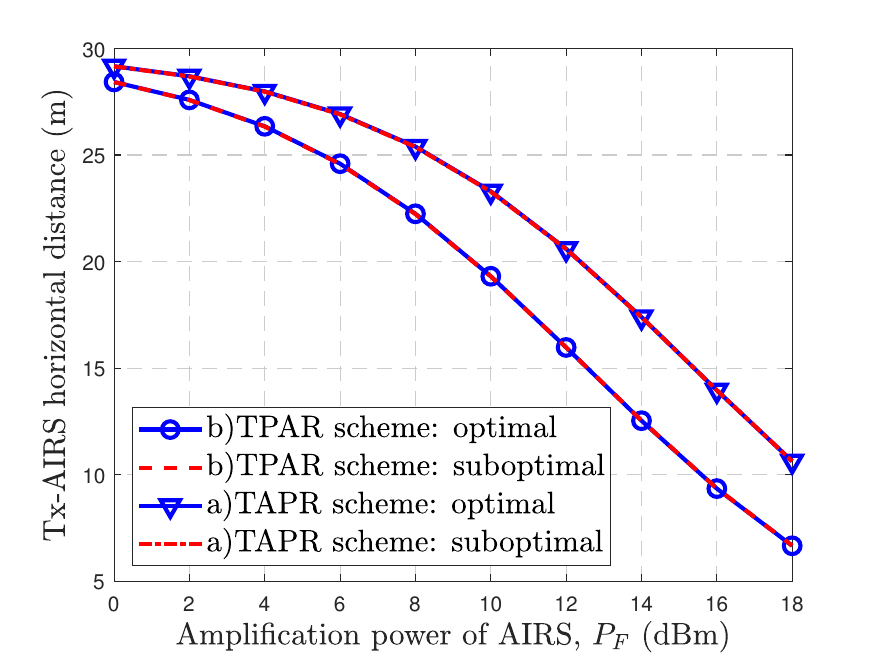}	
		\vspace{-3mm}		
		\caption{Optimal AIRS placement versus amplification power of the AIRS.} \label{Fig:element-distance}	
		\vspace{-6mm}
	\end{minipage}
	\hspace{2mm}
	\begin{minipage}{.28\textwidth}
		\centering
		\includegraphics[scale=0.34]{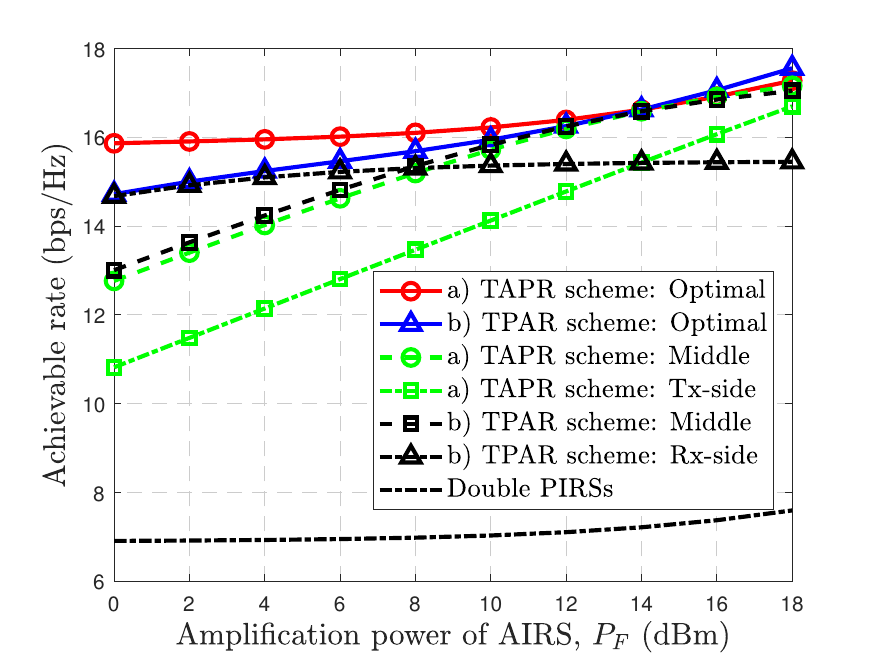}	
		\vspace{-3mm}	
		\caption{Achievable rate versus amplification power of the AIRS.} \label{Fig:P_F-rate}
		\vspace{-6mm}
	\end{minipage}
\hspace{2mm}
\begin{minipage}{.28\textwidth}
	\centering
	\includegraphics[scale = 0.34]{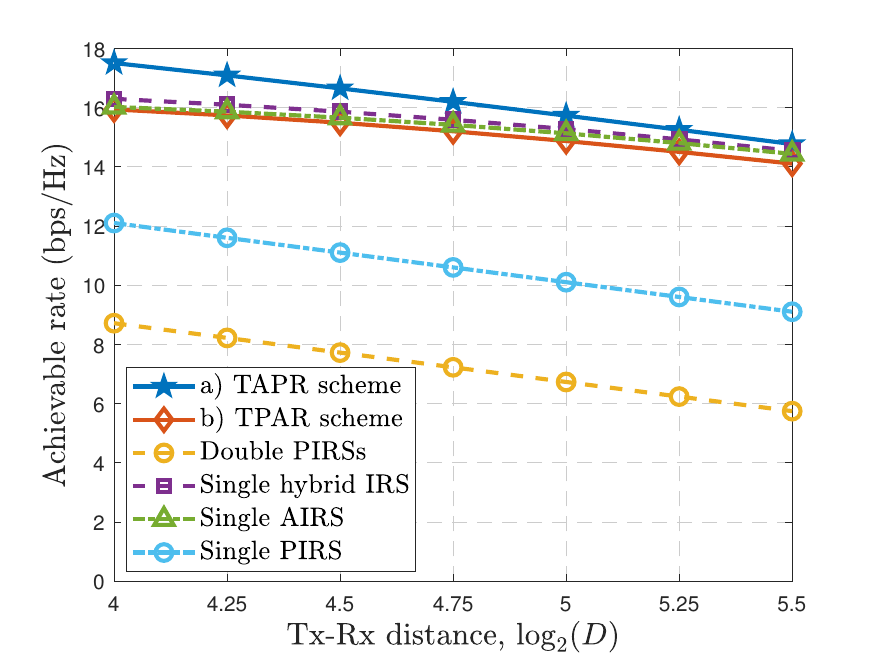}
	\vspace{-3mm}		
	\caption{Achievable rate versus Tx-Rx distance when $P_F = 2$ dBm.}\label{fig:distance}	
	\vspace{-6mm}
\end{minipage}
\end{figure*}

\begin{proposition}\label{proposition:TPARversusTATR}
Given $\eta > 1$ and $N_{p}\leq\frac{H_{P}}{\sqrt{\beta}}$, the achievable rate of the TPAR scheme is no less than that of the TAPR scheme, i.e., $\overline{ \sf{R}}_a \geq \overline{ \sf{R}}_b$, if 
\begin{eqnarray}
	P_{\rm F} \leq \frac{P_{\rm t}\sigma^2}{\sigma_{\rm F}^2}.
\end{eqnarray}
Otherwise, $\overline{ \sf{R}}_a < \overline{ \sf{R}}_b$.
The result is reversed when $N_{p}\geq\frac{H_{P}}{\sqrt{\beta}}$.
\end{proposition}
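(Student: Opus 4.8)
The plan is to reduce the rate comparison to an SNR comparison and then exploit a clean factorization of the SNR difference. Since $\log_2(1+\cdot)$ is strictly increasing, $\overline{\sf{R}}_a \geq \overline{\sf{R}}_b$ holds if and only if $\overline{\sf{SNR}}_a \geq \overline{\sf{SNR}}_b$, so it suffices to compare the two approximated SNRs. The hypothesis $\eta>1$ is exactly what guarantees that the optimized placements in \eqref{Solution:TAPR-x} and \eqref{Solution:TPAR-x} take the interior values $\frac{aD}{a+b}$ and $\frac{abD}{ab+1}$ rather than the boundaries $x_a$, $x_b$, so that the closed forms \eqref{SNR:TAPR-appro} and \eqref{SNR:TPAR-appro} are the relevant expressions to compare.

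First I would write both SNRs over their common positive prefactor $\frac{\beta N_a}{D^2}$ and subtract. The key step --- and essentially the whole content of the proof --- is to recognize that the difference factors as
\begin{eqnarray}\label{Eq:snr-diff-factor}
\overline{\sf{SNR}}_a - \overline{\sf{SNR}}_b = \frac{\beta N_a}{D^2}\Big(\frac{P_{\rm t}}{\sigma_{\rm F}^2} - \frac{P_{\rm F}}{\sigma^2}\Big)\Big(1 - \frac{\beta N_p^2}{H_P^2}\Big),
\end{eqnarray}
which follows because the common bracket $\frac{P_{\rm t}}{\sigma_{\rm F}^2} - \frac{P_{\rm F}}{\sigma^2}$ can be pulled out of both residual terms after cancellation. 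Since $\frac{\beta N_a}{D^2}>0$, the sign of the rate gap is controlled entirely by the product of the remaining two factors.

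The conclusion then follows from a sign table. The first factor $\frac{P_{\rm t}}{\sigma_{\rm F}^2} - \frac{P_{\rm F}}{\sigma^2}$ is nonnegative precisely when $P_{\rm F}\leq \frac{P_{\rm t}\sigma^2}{\sigma_{\rm F}^2}$, and the second factor $1 - \frac{\beta N_p^2}{H_P^2}$ is nonnegative precisely when $N_p \leq \frac{H_P}{\sqrt{\beta}}$. Hence, in the regime $N_p \leq \frac{H_P}{\sqrt{\beta}}$ the second factor is nonnegative, so $\overline{\sf{SNR}}_a\geq\overline{\sf{SNR}}_b$ iff $P_{\rm F}\leq \frac{P_{\rm t}\sigma^2}{\sigma_{\rm F}^2}$, with the strict reverse inequality when $P_{\rm F}>\frac{P_{\rm t}\sigma^2}{\sigma_{\rm F}^2}$; in the complementary regime $N_p \geq \frac{H_P}{\sqrt{\beta}}$ the second factor is nonpositive, which flips every comparison and yields the reversed statement.

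I do not anticipate a genuine obstacle: once \eqref{Eq:snr-diff-factor} is established, the result is a one-line sign argument. The only steps demanding care are the algebraic bookkeeping that produces the factorization --- verifying that the $N_p$-dependent terms share the same bracket $\frac{P_{\rm t}}{\sigma_{\rm F}^2}-\frac{P_{\rm F}}{\sigma^2}$ as the $N_p$-free terms, so that \eqref{Eq:snr-diff-factor} is exact and not merely approximate --- and confirming that the SNRs being compared are indeed the $\eta>1$ forms of \eqref{SNR:TAPR-appro} and \eqref{SNR:TPAR-appro} rather than the boundary-limited expressions, which is precisely where the standing assumption $\eta>1$ enters.
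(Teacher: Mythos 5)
Your proposal is correct and follows essentially the same route as the paper's proof: both reduce the rate comparison to comparing $\overline{\sf{SNR}}_a$ and $\overline{\sf{SNR}}_b$, write the difference as $\frac{P_{\rm F}\beta N_a}{D^2\sigma^2}(a+b-ab-1)$, factor it into $\frac{\beta N_a}{D^2}\bigl(\frac{P_{\rm t}}{\sigma_{\rm F}^2}-\frac{P_{\rm F}}{\sigma^2}\bigr)\bigl(1-\frac{\beta N_p^2}{H_P^2}\bigr)$, and conclude by a sign analysis of the two factors. Your added remark that $\eta>1$ is what ensures the interior placement solutions (rather than the boundary values $x_a$, $x_b$) underlie the closed-form SNRs is a useful clarification the paper leaves implicit, but it does not change the argument.
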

\begin{proof}
Comparing $\overline{ \sf{R}}_a$ and $\overline{ \sf{R}}_{b}$ is equivalent to comparing $\overline{ \sf{SNR}}_a$ and $\overline{ \sf{SNR}}_{b}$. 
Thus, with $\eta > 1$, based on \eqref{SNR:TAPR-appro} and \eqref{SNR:TPAR-appro}, we have
\begin{eqnarray} \label{proof:Eq-SNR}
\overline{ \sf{SNR}}_{a} - \overline{ \sf{SNR}}_b 
	&=&	\frac{P_{\rm F} \beta N_{a}}{D^2\sigma^2}(a+b-ab-1) \nonumber \\
	&=& \frac{ \beta N_{a} }{D^2}\frac{(\frac{P_{\rm t}}{\sigma_{\rm F}^2}-\frac{P_{\rm F}}{\sigma^2})(H_p^2-N_{p}^2\beta)
	}{H_p^2}.  
\end{eqnarray}	
Based on \eqref{proof:Eq-SNR}, we can conclude that $\overline{ \sf{R}}_a\geq \overline{ \sf{R}}_{b}$, if 
\begin{eqnarray}
	N_{p}\leq\frac{H_{P}}{\sqrt{\beta}}, \frac{P_{\rm F}}{\sigma^2} \leq \frac{P_{\rm t}}{\sigma_{\rm F}^2}, \text{or~} N_{p}\geq\frac{H_{P}}{\sqrt{\beta}}, \frac{P_{\rm F}}{\sigma^2} \geq \frac{P_{\rm t}}{\sigma_{\rm F}^2}.
\end{eqnarray}	
Otherwise, $\overline{ \sf{R}}_a < \overline{ \sf{R}}_b$.
This thus completes the proof.	
\end{proof}

\vspace{-4mm}
\section{Numerical Results}\label{Section:results}
In this section, we present numerical results to compare the rate performance of the TAPR and TPAR schemes in an AIRS-PIRS jointly aided communication system.
If not specified otherwise, the simulation parameters are set as follows. 
The AIRS and PIRS are assumed to be deployed at altitudes of $H_{A} = 6$ m and $H_{P} = 5$ m, respectively.
The carrier frequency is 3.5 GHz and thus the reference channel gain is $\beta = (\lambda/4\pi)^2 = -43$ dB, with the carrier wavelength $\lambda = 0.087$ m.	
Other parameters are set as $D = 30$ m, $P_{\rm t} = 20$ dBm, $\sigma^2 = -80$ dBm, and $\sigma_{\rm F}^2 = 4\sigma^2 = -74$ dBm.
Under our simulation setup, we consider a limited budget of reflecting elements with  $N_{a} < N_{p} < \frac{H_{P}}{\sqrt{\beta}}\approx 700$ (see Proposition 1). 
The numbers of AIRS and PIRS elements are thus fixed as $N_{a} = 450$ and $N_{p} = 600$, respectively.

For performance comparison, we consider the existing benchmarks with a single AIRS \cite{You2021active}, a hybrid IRS (with active and passive elements) \cite{Nguyen2021Hybrid}, double PIRSs \cite{Han2022DoubleIRS}, and a single PIRS \cite{Fu2021RIS}.
For the double-PIRS scheme, both PIRSs are equipped with $(N_{a} + N_{p})/2$ passive elements and deployed above the Tx and Rx, respectively.
As for the single-PIRS scheme, the PIRS is equipped with $N_{a} + N_{p}$ passive elements and deployed above the Tx or Rx.
Furthermore, the transmit power of Tx in the double-PIRS/single-PIRS systems is $(P_{\rm t} + P_{\rm F})$.
As such, it is fair to compare it with the AIRS-PIRS jointly aided system.
For the latter, we also consider four fixed AIRS/PIRS deployments, where $x_{AP}$ equals $D/2$ or $D$ in both TAPR/TPAR schemes.

Fig. \ref{Fig:element-distance} shows the optimal Tx-AIRS horizontal distance versus AIRS amplification power.
We observe that the optimal AIRS placement in both schemes is closer to the Rx (i.e., larger $x_{TA}$) as $P_{\rm F}$ decreases. 
This is consistent with Lemmas \ref{lemma:TAPR-distance} and \ref{lemma:TPAR-distance}.
Additionally, the proposed suboptimal solution achieves near-optimal performance, validating its effectiveness and the result in Lemma 3.

Fig. \ref{Fig:P_F-rate} compares the achievable rates versus AIRS amplification power.
First, as $P_{F}$ increases, we observe that the rate performance of the AIRS-PIRS jointly aided system increases due to the increasing amplification gain from AIRS.
However, compared to these fixed deployment strategies, the AIRS deployment design dynamically balances the signal and noise amplification at the AIRS, thereby further improving performance for both TAPR/TPAR schemes.
Furthermore, with the optimal AIRS placement, TAPR performs better than TPAR when $P_{\rm F} < \frac{P_{\rm t}\sigma^2}{\sigma_{\rm F}^2}$, because it suffers less path loss of the cascaded channel and thus results in higher received power.
These results are consistent with Proposition \ref{proposition:TPARversusTATR}.
In addition, thanks to the optimal AIRS/PIRS placement and the resulting amplification gain at AIRS, the AIRS-PIRS jointly aided system achieves a much higher rate than the baseline double-PIRS system under a limited IRS element budget.

		Fig. \ref{fig:distance} shows that the achievable rate of different schemes with optimal IRS placement versus Tx-Rx distance for $P_F = 2$ dBm.		
		We can observe that TAPR performs better than hybrid-IRS and single-AIRS benchmarks, especially when the Tx-Rx distance is relatively short.		
		This is because TAPR provides higher multiplicative beamforming and amplification gains by jointly using PIRS and AIRS as well as an additional path to attenuate amplification noise as compared to the two single-reflection schemes.		
		This indicates that deploying distributed AIRS and PIRS with the best ordering can be advantageous over combining them into one hybrid IRS or using AIRS only.

\section{Conclusion}\label{Section:conclusion}
In this letter, we studied a new AIRS-PIRS jointly aided wireless communication system. 
Under the LOS channel model, we analyzed the effects of the AIRS deployment in two transmission schemes (i.e., TAPR and TPAR) on their achievable rates.
Our analysis revealed that the AIRS should be deployed closer to the Rx in both schemes, and TAPR achieves a higher rate than TPAR when the number of PIRS elements and/or AIRS amplification power is limited. 
Simulation results validated our analysis and demonstrated that our proposed system considerably outperforms the existing benchmarks in terms of achievable rate, and the performance gain highly depends on the AIRS/PIRS deployment.

\bibliographystyle{IEEEtran}
\bibliography{ref} 

\end{document}